\documentclass[letterpaper,10pt,conference]{ieeeconf}
\IEEEoverridecommandlockouts
\usepackage{amsmath,amssymb}
\usepackage{newpxtext,newpxmath}
\usepackage{microtype}
\usepackage{cite}
\usepackage{booktabs}
\usepackage{graphicx}
\usepackage{xcolor}
\definecolor{AccentGreen}{RGB}{76,175,80}
 
\usepackage[
    colorlinks=true,
    linkcolor=AccentGreen,
    citecolor=AccentGreen,
    urlcolor=AccentGreen,
    filecolor=AccentGreen
]{hyperref}
 
\AtBeginDocument{%
    \hypersetup{
        colorlinks=true,
        linkcolor=AccentGreen,
        citecolor=AccentGreen,
        urlcolor=AccentGreen,
        filecolor=AccentGreen
    }%
}

\makeatletter
\renewcommand{\tagform@}[1]{%
    \maketag@@@{%
        \color{AccentGreen}%
        (\ignorespaces#1\unskip\@@italiccorr)%
    }%
}
 
\renewcommand{\@biblabel}[1]{{\color{AccentGreen}[#1]}}
\makeatother
 
\newtheorem{assumption}{Assumption}
\newtheorem{definition}{Definition}
\newtheorem{lemma}{Lemma}
\newtheorem{theorem}{Theorem}
\newtheorem{corollary}{Corollary}
\newtheorem{remark}{Remark}
 
\DeclareMathOperator{\col}{col}
\DeclareMathOperator{\diag}{diag}
\DeclareMathOperator{\Proj}{Proj}

\newcommand{\R}{\mathbb{R}}

\newcommand{\Dphi}{\Delta_{\phi}}
\newcommand{\DPhi}{\Delta_{\Phi}}
 
\title{\LARGE\bfseries
Adaptive Observer of Nonlinear One-Sided Lipschitz Systems Using
Estimated State Regressors With Finite Excitation}
 
\definecolor{EmailBlue}{RGB}{35,110,180}
\author{Hamid Taghavifar and Brian Delgado Aguilar%
\thanks{H. Taghavifar and B. Delgado Aguilar are with the Department of Mechanical, Industrial and Aerospace Engineering, Concordia University, Montreal, QC H3G 1M8, Canada
(e-mail:
\href{mailto:hamid.taghavifar@concordia.ca}
{\textcolor{EmailBlue}{\textit{hamid.taghavifar@concordia.ca}}};
\href{mailto:brian27ada@gmail.com}
{\textcolor{EmailBlue}{\textit{brian27ada@gmail.com}}}).}%
\thanks{This work has been submitted to the IEEE for possible publication. Copyright may be transferred without notice, after which this version may no longer be accessible.}%
}
\begin{document}

\maketitle
\thispagestyle{empty}
\pagestyle{empty}

\begin{abstract}
For systems with unknown parameters, finite excitation and concurrent learning can potentially yield parameter convergence without persistent excitation but the regressor may still depend on inaccessible states, leading to regressor mismatch. In this paper, this problem is addressed for a class of nonlinear systems with one-sided Lipschitz properties and quadratically inner-bounded nonlinearities with bounded disturbances and linearly parametrized uncertainties. To this aim, an output-integral regression is utilized by using measured outputs and estimated states, and history-stack residual is explicitly bounded in terms of state-estimation error and disturbance. Furthermore, a perturbation bound between the estimated-state and true-state information matrices is derived. Additionally, an OSL-QIB LMI condition is applied for the observer design and a projected adaptive law is designed without needing exact output matching. Stability analysis's results indicate the proposed observer and parameter estimation outperform observers without history-stack learning term.
\end{abstract}

\section{Introduction}

In linearly parametrized dynamical systems, parameter convergence in observers is carried out by applying the persistence of excitation (PE) condition \cite{Annaswamy}. However, PE condition can limit the class of systems due to a need for regressors to stay sufficiently rich during every moving time window \cite{Kreisselmeier1}. When disturbances and weak excitation are present, robust modifications such as leakage and projection can prevent parameter drift, but they do not yield the informative data essential for parameter convergence \cite{CaiDeQueirozDawson2006}.

Finite-excitation methods tackle this limitation by gathering informative data during finite time interval. Concurrent learning (CL) instills stored data in parameter update process \cite{Kamalapurkar2017CL} and integral concurrent learning (ICL) forms integral regressions without state-derivative estimation \cite{Parikh2019}. In adaptive observer design, finite- and initial-excitation methods have been developed for disturbed MIMO LTI systems \cite{Bhasin2,Bhasin1}. Furthermore, nonlinear observers without PE and filtered-regression structures have been studied in
\cite{TomeiMarino2023,Pyrkin2019}.
 
According to the state-of-the-art literature, an output-feedback concurrent-learning observer was developed for second-order nonlinear systems by using estimated state trajectories \cite{Kamalapurkar2017Observer}. Moreover, in \cite{OgriBellKamalapurkar2023}, an incremental-multiplier/LMI observer was added to concurrent learning for affine nonlinear systems. Such related studies indicate that estimated states can be used for concurrent learning. That said, the above studies have not considered the OSL-QIB systems in which the explicit relation between the estimated-state and true-state finite-excitation matrices, disturbance-dependent stored-data residuals, and a combined state-parameter stability condition exists.

The use of Lyapunov and LMI conditions for designing observers for one-sided Lipschitz (OSL), quadratically inner-bounded (QIB), and related non-globally Lipschitz systems was reported by \cite{AbbaszadehMarquez2010,ZemoucheRajamani2022} where mainly state estimation for known nonlinear models was considered without finite-excitation estimation of unknown parameters. On the other hand, the estimated-state concurrent-learning observers discussed above have not considered observer and data-quality conditions based on OSL-QIB incremental inequalities. Combining these two settings introduces an observer-error-dependent residual into the stored output-integral regression. Moreover, the information matrix constructed using estimated states can differ from the unavailable true-state information matrix. Hence, the effects of both differences, together with bounded disturbances, need to be considered in the parameter estimation and stability analyses.
A finite-excitation adaptive observer design for the class of OSL-QIB nonlinear system based on estimated state regressor is proposed in this paper. The major contributions of this paper include: (i) formulating an output integral regression without output differentiation but including the residual due to state estimation error and disturbance in it and obtaining a perturbation inequality for estimated state and true state finite-excitation matrices, (ii) formulating a projected adaptive law by employing the computable estimated state finite-excitation condition along with OSL-QIB LMI with no exact output matching, and (iii) ensuring the regional uniformly ultimately boundedness of state and parameter errors with exponential stability for disturbance-free and residual-free stored data.

\section{Problem Formulation}
\label{sec:problem}

The following nonlinear system is considered:

\begin{equation}
\label{eq:plant}
\begin{aligned}
\dot{x} &= Ax+Bu+\phi(x,u)+\Phi(x,u)\theta+Dd(t) \\
y &= Cx 
\end{aligned}
\end{equation}

where $x\in\R^n$, $u\in\R^m$, and $y\in\R^p$ denote the system state, known input, and measured output, respectively. Moreover, $\theta\in\R^q$ is the unknown constant parameter vector and $d\in\R^{n_d}$ stands for an unknown disturbance. The matrices $A$, $B$, $C$, and $D$ and functions $\phi:\R^n\times\R^m\to\R^n$ and $\Phi:\R^n\times\R^m\to\R^{n\times q}$ are known and the functions are locally Lipschitz with respect to the state.  

\begin{assumption}
\label{ass:compact}

The input $u$ is locally bounded and belongs to the compact set $\mathcal U\subset\R^m$. The disturbance $d$ is also locally bounded and satisfies $\|d(t)\|\leq\bar d$. The solution exists for each $t$ and stays within the known operating region of $\mathcal X\subset\R^n$. Additionally, the unknown parameters are within the known compact convex set below:

\begin{equation}
\label{eq:theta_set}
\Omega_\theta\triangleq\{\vartheta\in\R^q:\|\vartheta\|\le\bar\theta\} 
\end{equation}
For the analysis, we consider $r_e>0$ and a compact set $\hat{\mathcal X}$ such that $x-e\in\hat{\mathcal X}$ for all $x\in\mathcal X$ and $\|e\|\leq r_e$. The following incremental inequalities are needed on:
\begin{equation}
\label{eq:design_domain}
\begin{aligned}
\mathcal D_e\triangleq\{(x,\hat x,u):{}&x\in\mathcal X,\quad
\hat x\in\hat{\mathcal X}u\in\mathcal U,\quad \|x-\hat x\|\le r_e\} 
\end{aligned}
\end{equation}
If the inequalities hold globally, $r_e=\infty$ and $\hat{\mathcal X}=\R^n$ can be considered. For the estimation error $e\triangleq x-\hat x$, we define
\begin{equation}
\label{eq:increments}
\Dphi\triangleq\phi(x,u)-\phi(\hat x,u),\qquad
\DPhi\triangleq\Phi(x,u)-\Phi(\hat x,u)
\end{equation}
\end{assumption}

\begin{assumption}
\label{ass:oslqib}
The following conditions are satisfied by the nonlinear increments below for or all $(x,\hat x,u)\in\mathcal D_e$:
\begin{align}
e^\top\Dphi &\le \rho\|e\|^2 \label{eq:osl}\\
\|\Dphi\|^2 &\le \alpha\|e\|^2+\beta e^\top\Dphi \label{eq:qib}\\
\|\Dphi\| &\le \ell_\phi\|e\| \label{eq:local_phi}
\end{align}
where $\rho,\alpha,\beta\in\R$ and $\ell_\phi$ are known constants. Furthermore, the first two conditions are considered in observer design and last inequality condition is merely utilized for obtaining the bound for the finite-window data residual.
\end{assumption}

\begin{assumption}
\label{ass:Phi_lip}
There are known positive constants $\bar\Phi$ and $\ell_\Phi$ such that for all $(x,\hat x,u)\in\mathcal D_e$ we have:
\begin{align}
\|\Phi(\hat x,u)\|&\le\bar\Phi  \label{eq:Phi_bound}\\
\|\Phi(x,u)-\Phi(\hat x,u)\|&\le\ell_\Phi\|x-\hat x\|  \label{eq:Phi_lip}
\end{align}
\end{assumption}
The following condition yields convex observer design condition. Let $v\in\R^n$ standing for $\Dphi$ and $w\in\R^n$ denoting $\DPhi\theta$. First, we define $\kappa_\Phi\triangleq\ell_\Phi\bar\theta$. Subsequently, lets consider augmented vector $\xi\triangleq\col(e,v,w,d)$ as well as symmetric matrices below:
\begin{align}
\mathcal M_1&\triangleq
\begin{bmatrix}
2\rho I&-I&0&0\\
-I&0&0&0\\
0&0&0&0\\
0&0&0&0
\end{bmatrix} \label{eq:M1}\\
\mathcal M_2&\triangleq
\begin{bmatrix}
\alpha I&\frac{\beta}{2}I&0&0\\
\frac{\beta}{2}I&-I&0&0\\
0&0&0&0\\
0&0&0&0
\end{bmatrix} \label{eq:M2}\\
\mathcal M_3&\triangleq
\begin{bmatrix}
\kappa_\Phi^2 I&0&0&0\\
0&0&0&0\\
0&0&-I&0\\
0&0&0&0
\end{bmatrix} \label{eq:M3}
\end{align}
where zero and identity blocks have $\xi$ induced dimensions.
 \begin{assumption}
\label{ass:observer_lmi}
There are $P=P^\top$, $Y\in\R^{n\times p}$, $a_e$, $a_d$, and $\tau_j$ for $j\in\{1,2,3\}$ such that we can have the following:
\begin{equation}
\label{eq:observer_lmi}
\mathcal Q(P,Y,a_e,a_d)+\sum_{j=1}^3\tau_j\mathcal M_j\preceq0
\end{equation}
where
\begin{equation}
\label{eq:Xi_observer}
\Xi\triangleq PA+A^\top P-YC-C^\top Y^\top+a_eI 
\end{equation}
and
\begin{equation}
\label{eq:Q_observer}
\mathcal Q\triangleq
\begin{bmatrix}
\Xi&P&P&PD\\
P&0&0&0\\
P&0&0&0\\
D^\top P&0&0&-a_dI
\end{bmatrix}
\end{equation}
\end{assumption}
Then, the observer gain is obtained as $L=P^{-1}Y$. In comparison with the Young's inequality formulation which involves $P^2$, \eqref{eq:observer_lmi} is affine in all decision variables and conceivably is an actual LMI.

\begin{lemma}
\label{lem:observer_dissipation}
With assumptions~\ref{ass:compact}-\ref{ass:observer_lmi}, we can define
$\eta_0\triangleq(A-LC)e+\Dphi+\DPhi\theta+Dd$. Then the observer gain $L=P^{-1}Y$ satisfies in $\mathcal D_e$ the following:
\begin{equation}
\label{eq:observer_dissipativity}
2e^\top P\eta_0\le-a_e\|e\|^2+a_d\|d\|^2
\end{equation}
\end{lemma}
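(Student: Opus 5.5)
This is a standard S-procedure argument. Fix $(x,\hat x,u)\in\mathcal D_e$ and let $d$ be the current disturbance. Set $v=\Dphi$, $w=\DPhi\theta$ and $\xi=\col(e,v,w,d)$. The plan has three steps: write the left-hand side of \eqref{eq:observer_dissipativity} plus the supply term as the quadratic form $\xi^\top\mathcal Q\xi$; show that each $\xi^\top\mathcal M_j\xi$ is nonnegative on $\mathcal D_e$; then combine these with the LMI \eqref{eq:observer_lmi}. The implicit requirement $\tau_j\ge0$ must be read into Assumption~\ref{ass:observer_lmi}, as is standard for S-procedure multipliers. $P\succ0$ is also needed for $L=P^{-1}Y$ to exist, although the inequality itself does not use positivity of $P$.

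\textbf{Step 1 (quadratic form).} Since $PL=Y$, we have $P(A-LC)=PA-YC$, and therefore
\[
2e^\top P(A-LC)e=e^\top\big(PA+A^\top P-YC-C^\top Y^\top\big)e=e^\top(\Xi-a_eI)e .
\]
The remaining terms $2e^\top Pv+2e^\top Pw+2e^\top PDd$ are exactly the off-diagonal contributions of the first block row and column of $\mathcal Q$ in \eqref{eq:Q_observer}. Expanding $\xi^\top\mathcal Q\xi$ therefore gives
\[
\xi^\top\mathcal Q\xi=2e^\top P\eta_0+a_e\|e\|^2-a_d\|d\|^2 .
\]
The target inequality is thus equivalent to $\xi^\top\mathcal Q\xi\le0$.

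\textbf{Step 2 (multiplier terms).} Each $\mathcal M_j$ is chosen so that its quadratic form is nonnegative on $\mathcal D_e$.
\begin{itemize}
\item For $\mathcal M_1$: $\xi^\top\mathcal M_1\xi=2\rho\|e\|^2-2e^\top v\ge0$ by the OSL condition \eqref{eq:osl}.
\item For $\mathcal M_2$: $\xi^\top\mathcal M_2\xi=\alpha\|e\|^2+\beta e^\top v-\|v\|^2\ge0$ by the QIB condition \eqref{eq:qib}.
\item For $\mathcal M_3$: $\xi^\top\mathcal M_3\xi=\kappa_\Phi^2\|e\|^2-\|w\|^2$. By Assumption~\ref{ass:Phi_lip}, $\|w\|\le\|\DPhi\|\,\|\theta\|\le\ell_\Phi\|e\|\,\bar\theta=\kappa_\Phi\|e\|$, using $\theta\in\Omega_\theta$ from Assumption~\ref{ass:compact}. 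Hence this term is also nonnegative.
\end{itemize}
All three bounds hold on $\mathcal D_e$ because that is where Assumptions~\ref{ass:oslqib} and~\ref{ass:Phi_lip} are imposed.

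\textbf{Step 3 (conclusion).} Pre- and post-multiply \eqref{eq:observer_lmi} by $\xi^\top$ and $\xi$. This gives
\[
\xi^\top\mathcal Q\xi\le-\sum_j\tau_j\,\xi^\top\mathcal M_j\xi\le0,
\]
where the second inequality uses $\tau_j\ge0$ and Step 2. Combined with Step 1, this is \eqref{eq:observer_dissipativity}.

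The only delicate point is the sign convention on the $\tau_j$ and on the $\mathcal M_j$. The inequalities hold as $\xi^\top\mathcal M_j\xi\ge0$, so adding $\tau_j\mathcal M_j$ with nonnegative $\tau_j$ to $\mathcal Q$ in a $\preceq0$ condition is exactly what is needed. I would state $\tau_j\ge0$ explicitly. Beyond that, the only work is bookkeeping the block expansion in Step 1, in particular the factor $2$ from symmetric off-diagonal blocks.
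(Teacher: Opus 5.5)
Your proposal is correct and follows the same S-procedure argument as the paper. You expand $\xi^\top\mathcal Q\xi$ using $Y=PL$, obtain $\xi^\top\mathcal M_j\xi\ge0$ from the OSL, QIB and $\Phi$-Lipschitz bounds (using $\|\theta\|\le\bar\theta$), and combine these with the LMI; your explicit requirement $\tau_j\ge0$, which the paper leaves implicit, is exactly what the last inequality needs (the paper's ``positive definite'' for these forms should really read ``nonnegative'').
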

\begin{proof}
From the OSL and QIB conditions, we have positive definite
$\xi^\top\mathcal M_1\xi$ and
$\xi^\top\mathcal M_2\xi$. Also, $
\|\DPhi\theta\|\le\ell_\Phi\bar\theta\|e\|
=\kappa_\Phi\|e\|,$ which yields $\xi^\top\mathcal M_3\xi\ge0$. By multiplying \eqref{eq:observer_lmi} from both sides by $\xi^\top$ and $\xi$, respectively, it gives
\[
\xi^\top\mathcal Q\xi
\le-\sum_{j=1}^3\tau_j\xi^\top\mathcal M_j\xi\le0.
\]
By using $Y=PL$, $v=\Dphi$, and $w=\DPhi\theta$, the left hand side can be written as follows, \eqref{eq:observer_dissipativity} is obtained to yield:
\begin{align*}
\xi^\top\mathcal Q\xi
={}&2e^\top P\big[(A-LC)e+\Dphi+\DPhi\theta+Dd\big]\\
&+a_e\|e\|^2-a_d\|d\|^2
\end{align*}
\end{proof}
 
\section{Observer and Estimated-State Learning Identity}
\label{sec:identity}

The observer is considered as follows:
\begin{equation}
\label{eq:observer}
\dot{\hat x}=A\hat x+Bu+\phi(\hat x,u)+\Phi(\hat x,u)\hat\theta
+L(y-C\hat x) 
\end{equation}
The observer's error dynamics is derived as following by defining $\tilde\theta\triangleq\theta-\hat\theta$:
\begin{equation}
\label{eq:error_dynamics}
\dot e=(A-LC)e+\Dphi+\Phi(\hat x,u)\tilde\theta
+\DPhi\theta+Dd 
\end{equation}
The known bounded design map $\Psi:\hat{\mathcal X}\times\mathcal U\to\R^{p\times q}$ can be selected to derive the implementable parameter update. Accordingly, the following terms are defined:
\begin{align}
\mathcal M_\Psi(\hat x,u)&\triangleq
P\Phi(\hat x,u)-C^\top\Psi(\hat x,u) \label{eq:Mpsi}\\
\bar m_\Psi&\triangleq
\sup_{(\hat x,u)\in\hat{\mathcal X}\times\mathcal U}
\|\mathcal M_\Psi(\hat x,u)\| \label{eq:mpsi_bound}
\end{align}
\begin{remark}
Let $\Pi_C\triangleq C^\top(C^\top)^\dagger$. Since $I-\Pi_C$ is an
orthogonal projector, for every $\Psi$,
$
\|(I-\Pi_C)(P\Phi-C^\top\Psi)\|
\leq\|P\Phi-C^\top\Psi\|$ which means $\|(I-\Pi_C)P\Phi\| \leq \|P\Phi-C^\top\Psi\|
$. The choice $\Psi^\star=(C^\top)^\dagger P\Phi$ brings this bound
pointwise and thus minimizes $\bar m_\Psi$. The exact matching is
possible if and only if $(I-\Pi_C)P\Phi=0$ in the designing domain. Thus, $\Psi^\star$ gives the least restrictive condition with respect to $\Psi$
\eqref{eq:coupled_gain_condition}; under exact matching, this condition
holds for every $k_c$.
\end{remark}

For the sake of brevity, the known nominal vector field is defined by $\label{eq:f0}
f_0(\zeta,u)\triangleq A\zeta+Bu+\phi(\zeta,u)$.

\begin{definition}
\label{def:data_pair}
The following data pair is defined for a window length $\Delta$ and a sampling instant $t_i\ge\Delta$: 
\begin{align}
\mathcal Y_i&\triangleq y(t_i)-y(t_i-\Delta)
-\int_{t_i-\Delta}^{t_i}C f_0(\hat x(\tau),u(\tau))d\tau
\label{eq:Y_i}\\
\mathcal G_i&\triangleq
\int_{t_i-\Delta}^{t_i}C\Phi(\hat x(\tau),u(\tau))d\tau \label{eq:G_i}
\end{align}
Therefore, $\mathcal Y_i\in\R^p$ and $\mathcal G_i\in\R^{p\times q}$ can be derived by utilizing $u$, $y$, and $\hat x$.
\end{definition}

\begin{lemma}
\label{lem:regression}
When the corresponding data window lies in the design domain, we can have:
\begin{equation}
\label{eq:regression}
\mathcal Y_i=\mathcal G_i\theta+r_i
\end{equation}
where
\begin{equation}
\label{eq:r_i}
r_i=\int_{t_i-\Delta}^{t_i}C\big[Ae(\tau)+\Dphi(\tau)
+\DPhi(\tau)\theta+Dd(\tau)\big]d\tau
\end{equation}
Furthermore
\begin{equation}
\label{eq:r_i_bound}
\|r_i\|\le
c_e\int_{t_i-\Delta}^{t_i}\|e(\tau)\|d\tau
+c_d\int_{t_i-\Delta}^{t_i}\|d(\tau)\|d\tau
\end{equation}
where $ c_e\triangleq\|C\|\big(\|A\|+\ell_\phi+\ell_\Phi\bar\theta\big),
\,\,\,\, c_d\triangleq\|CD\|$.
 
\end{lemma}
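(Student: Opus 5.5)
The plan is to integrate the true output dynamics over the window and then regroup the integrand around the estimated state $\hat x$. By the plant model \eqref{eq:plant} and $y=Cx$, the output is absolutely continuous along solutions, with $\dot y = C\dot x$. The fundamental theorem of calculus therefore gives
\[
y(t_i)-y(t_i-\Delta)=\int_{t_i-\Delta}^{t_i}C\big[f_0(x(\tau),u(\tau))+\Phi(x(\tau),u(\tau))\theta+Dd(\tau)\big]d\tau ,
\]
where $f_0$ is the nominal vector field defined above. No output differentiation is needed here, because the identity is purely analytic and is never implemented.

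Next I add and subtract the quantities evaluated at $\hat x(\tau)$. Using $x=\hat x+e$ and the increments \eqref{eq:increments}, I write $f_0(x,u)=f_0(\hat x,u)+Ae+\Dphi$ and $\Phi(x,u)\theta=\Phi(\hat x,u)\theta+\DPhi\theta$. Substituting these into the integral and moving the term $\int C f_0(\hat x,u)\,d\tau$ to the left-hand side reproduces exactly $\mathcal Y_i$ from \eqref{eq:Y_i}. The term $\int C\Phi(\hat x,u)\,d\tau\,\theta$ is $\mathcal G_i\theta$ by \eqref{eq:G_i}. Everything that remains is precisely $r_i$ in \eqref{eq:r_i}, which establishes \eqref{eq:regression}. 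The hypothesis that the window lies in the design domain matters only for the next step, where the increment bounds are used.

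For \eqref{eq:r_i_bound}, I move the norm inside the integral and apply the triangle inequality to the integrand. The individual terms are bounded as follows:
\begin{itemize}
\item $\|CAe\|\le\|C\|\|A\|\|e\|$;
\item $\|C\Dphi\|\le\|C\|\ell_\phi\|e\|$ by \eqref{eq:local_phi};
\item $\|C\DPhi\theta\|\le\|C\|\ell_\Phi\|\theta\|\|e\|\le\|C\|\ell_\Phi\bar\theta\|e\|$ by \eqref{eq:Phi_lip} together with $\theta\in\Omega_\theta$ from \eqref{eq:theta_set};
\item $\|CDd\|\le\|CD\|\|d\|$.
\end{itemize}
Collecting these gives the constants $c_e$ and $c_d$ as stated.

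The main subtlety is making sure that \eqref{eq:local_phi} and \eqref{eq:Phi_lip} are valid at every $\tau$ in the window. This is exactly what the assumption $(x(\tau),\hat x(\tau),u(\tau))\in\mathcal D_e$ for all $\tau\in[t_i-\Delta,t_i]$ supplies. A secondary point is integrability: $e$, $d$, and the composed nonlinearities are locally bounded and measurable along the solution by Assumption~\ref{ass:compact} and local Lipschitzness, so all the integrals are well defined.
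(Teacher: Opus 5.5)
Your proposal is correct and follows the paper's proof essentially step for step: you integrate $C\dot x$ over the window using absolute continuity, substitute $x=\hat x+e$ and add and subtract the terms evaluated at $\hat x$ to isolate $\mathcal Y_i$, $\mathcal G_i\theta$, and $r_i$, and then bound $r_i$ termwise via the triangle inequality, \eqref{eq:local_phi}, \eqref{eq:Phi_lip}, and $\|\theta\|\le\bar\theta$. Your explicit remarks on where the design-domain hypothesis enters and on integrability are details the paper leaves implicit.
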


\begin{proof}
Due to absolute continuity of $x$, one can write:
\begin{align*}
y(t_i)-y(t_i-\Delta)
&=\int_{t_i-\Delta}^{t_i}C\dot x(\tau)d\tau
=\int_{t_i-\Delta}^{t_i}C f_0(x,u)d\tau\\
&\quad+\int_{t_i-\Delta}^{t_i}C\big[\Phi(x,u)\theta+Dd\big]d\tau
\end{align*}
By subtracting the integral term in \eqref{eq:Y_i}, by using $x=\hat x+e$, and also adding and subtracting $C\Phi(\hat x,u)\theta$, equations \eqref{eq:regression}-\eqref{eq:r_i} are derived. Subsequently, by leveraging the triangle inequality together with Assumptions~\ref{ass:oslqib}-\ref{ass:Phi_lip} gives us:
\begin{align*}
\|r_i\|
&\le c_e\int_{t_i-\Delta}^{t_i}\|e(\tau)\|d\tau
+c_d\int_{t_i-\Delta}^{t_i}\|d(\tau)\|d\tau
\end{align*}
which is \eqref{eq:r_i_bound} and completes the proof.
\end{proof}

\section{Finite-Excitation Adaptive Law}
\label{sec:adaptive_law}

Consider the history stack $\mathcal I_N=\{t_1,\ldots,t_N\}$ and the following matrix being defined as follows:
\begin{equation}
\label{eq:S_hat}
\hat S_N\triangleq\sum_{i=1}^N\mathcal G_i^\top\mathcal G_i
\end{equation}

\begin{definition}
\label{def:es_fe}
The integral-regression stack is called finitely exciting if for some $N\in\mathbb N$ and a known constant $\sigma_N>0$, it satisfies following condition:
\begin{equation}
\label{eq:fe_condition}
\sigma_N I_q \preceq \hat{S}_N
\end{equation}
The above condition implies the finite-data informativity of the stored integral regressions which can be assessed by using the recorded $\mathcal G_i$ directly.
\end{definition}

\begin{lemma}
\label{lem:estimated_true_fe}
Defining
\begin{equation}
\label{eq:true_stack}
\begin{aligned}
\mathcal G_i^x&\triangleq
\int_{t_i-\Delta}^{t_i}C\Phi(x(\tau),u(\tau))d\tau, \,
S_N^x\triangleq\sum_{i=1}^N(\mathcal G_i^x)^\top\mathcal G_i^x
\end{aligned}
\end{equation}
Let
\begin{align}
\delta_i\triangleq\|C\|\ell_\Phi
\int_{t_i-\Delta}^{t_i}\|e(\tau)\|d\tau ;\,\, 
\rho_G\triangleq\sum_{i=1}^N
\big(2\|\mathcal G_i\|\delta_i+\delta_i^2\big)  
\end{align}
Then
\begin{equation}
\label{eq:stack_perturbation}
\|S_N^x-\hat S_N\|\le\rho_G,\qquad
\lambda_{\min}(\hat S_N)-\rho_G\le \lambda_{\min}(S_N^x)
\end{equation}
Hence, for any preferable positive $\sigma_x$:
\begin{equation}
\label{eq:true_fe_margin}
\rho_G+\sigma_x \le \lambda_{\min}(\hat S_N)
\quad\Longrightarrow\quad
\sigma_xI_q\preceq S_N^x 
\end{equation}
\end{lemma}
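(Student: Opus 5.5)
The plan is to treat $\mathcal G_i^x$ as a perturbation of $\mathcal G_i$, bound each perturbation in norm by $\delta_i$, and then transfer that bound to the Gram matrices and their smallest eigenvalues. We set $E_i\triangleq\mathcal G_i^x-\mathcal G_i=\int_{t_i-\Delta}^{t_i}C\DPhi(\tau)d\tau$.

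\textbf{Step 1 (bound on $E_i$).} Assume, as in Lemma~\ref{lem:regression}, that each stored window lies in $\mathcal D_e$. Moving the norm inside the integral and applying \eqref{eq:Phi_lip} of Assumption~\ref{ass:Phi_lip} gives
\[
\|E_i\|\le\|C\|\int_{t_i-\Delta}^{t_i}\ell_\Phi\|e(\tau)\|d\tau=\delta_i .
\]

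\textbf{Step 2 (bound on the stack).} Substituting $\mathcal G_i^x=\mathcal G_i+E_i$ and expanding gives
\[
(\mathcal G_i^x)^\top\mathcal G_i^x-\mathcal G_i^\top\mathcal G_i=\mathcal G_i^\top E_i+E_i^\top\mathcal G_i+E_i^\top E_i .
\]
Submultiplicativity of the spectral norm, together with $\|E_i^\top\|=\|E_i\|$, bounds the norm of each term by $2\|\mathcal G_i\|\delta_i+\delta_i^2$. Summing over $i$ with the triangle inequality yields the first claim in \eqref{eq:stack_perturbation}, namely $\|S_N^x-\hat S_N\|\le\rho_G$.

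\textbf{Step 3 (eigenvalue bound).} Both $S_N^x$ and $\hat S_N$ are symmetric positive semidefinite. For any unit vector $z$,
\[
z^\top S_N^x z\ge z^\top\hat S_N z-\|S_N^x-\hat S_N\|\ge\lambda_{\min}(\hat S_N)-\rho_G .
\]
Taking the minimum over unit vectors $z$ (the Rayleigh quotient characterization, equivalently Weyl's inequality) gives the second claim in \eqref{eq:stack_perturbation}.

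\textbf{Step 4 (implication \eqref{eq:true_fe_margin}).} If $\rho_G+\sigma_x\le\lambda_{\min}(\hat S_N)$, Step 3 gives $\lambda_{\min}(S_N^x)\ge\sigma_x$. Since $S_N^x$ is symmetric, this is equivalent to $\sigma_xI_q\preceq S_N^x$.

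None of these steps is a real obstacle; the argument is routine matrix perturbation. The only point needing care is the standing hypothesis that the trajectories over all stored windows remain in $\mathcal D_e$, so that \eqref{eq:Phi_lip} may be applied pointwise in $\tau$. I would state this hypothesis explicitly, mirroring the phrasing of Lemma~\ref{lem:regression}.
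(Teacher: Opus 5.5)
Your proposal is correct and follows the paper's proof step for step: the same perturbation $E_i=\mathcal G_i^x-\mathcal G_i$ bounded by $\delta_i$ via \eqref{eq:Phi_lip}, the same three-term expansion with submultiplicativity and the triangle inequality, and Weyl's inequality, which you verify directly through the Rayleigh quotient. Stating explicitly that every stored window must lie in $\mathcal D_e$ is a useful clarification, since the paper relies on this hypothesis without stating it.
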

\vspace{0.2 cm}
\begin{proof}
Let $E_i\triangleq\mathcal G_i^x-\mathcal G_i$. From \eqref{eq:Phi_lip}, $
\|E_i\|\le\int_{t_i-\Delta}^{t_i}
\|C\|\|\DPhi(\tau)\|d\tau\le\delta_i$. As $\mathcal G_i^x=\mathcal G_i+E_i$, the difference between the two stack matrices can be written as $\mathcal G_i^x=\mathcal G_i+E_i$, then$
S_N^x-\hat S_N
=\sum_{i=1}^N
\big(\mathcal G_i^\top E_i+E_i^\top\mathcal G_i+E_i^\top E_i\big).
$ 
We can obtain the following by utilizing the submultiplicativity and the triangle inequality
$\|S_N^x-\hat S_N\|
\le\sum_{i=1}^N\big(2\|\mathcal G_i\|\|E_i\|+\|E_i\|^2\big)
\le\rho_G.
$
Since the second inequality in \eqref{eq:stack_perturbation} is derived from Weyl's eigenvalue perturbation inequality, and consequently, \eqref{eq:true_fe_margin} holds.
\end{proof}

\begin{remark}
The stored pairs $(\mathcal Y_i,\mathcal G_i)$ and estimated-state finite-excitation condition \eqref{eq:fe_condition} are computable online. In contrast, $S_N^x$, $\rho_G$, $R_N$, and real state error terms all leverage of unavailable true state data and are utilized only for analysis and offline verifications. 
\end{remark}

Points will be added to candidates when doing so will add to the rank of the information matrix because its minimum eigenvalue is zero until it achieves full column rank. Thereafter, points will be added or replaced by existing points only when the minimum eigenvalue increases. The stack is frozen when \eqref{eq:fe_condition} is first
satisfied. Denoting its number of stored points by $N$, the activation
time is
\begin{equation}
\label{eq:TF}
T_F\triangleq\max_{1\leq i\leq N}t_i
\end{equation}
Thus, the stored $\mathcal Y_i$, $\mathcal G_i$, and aggregate residual
remain constant for $t\geq T_F$. Since
$\tilde y\triangleq y-C\hat x=Ce$ is available from measurement, the error for each stored regression is defined by $ \varepsilon_i(\hat\theta)
\triangleq\mathcal Y_i-\mathcal G_i\hat\theta
$. The projected parameter update used in this work is then
\begin{align}
\nu(t)&\triangleq\Gamma\Psi(\hat x,u)^\top\tilde y
+\chi_F(t)k_c\Gamma\sum_{i=1}^N
\mathcal G_i^\top\varepsilon_i(\hat\theta)
\label{eq:update_direction}\\
\dot{\hat\theta}&=\Proj_{\Omega_\theta}(\hat\theta,\nu(t))
\label{eq:adapt_law}
\end{align}
In the above equations, $\Gamma=\Gamma^\top\succ0$ and $k_c>0$. Also, $\chi_F(t)=0$ before $T_F$, while $\chi_F(t)=1$ for $t\ge T_F$. We use the standard projection operator in \cite{CaiDeQueirozDawson2006} and start the update with $\hat\theta(0)\in\Omega_\theta$. Therefore, $\hat\theta(t)$ remains inside $\Omega_\theta$, and for every $\theta\in\Omega_\theta$, the following property holds:
 
\begin{equation}
\label{eq:proj_prop}
0 \le
\tilde{\theta}^{\top}\Gamma^{-1}
\big[\Proj_{\Omega_\theta}(\hat{\theta},\nu)-\nu\big]
\end{equation}

\begin{lemma}
\label{lem:param_ineq}
After activation of the history stack, i.e., for every $t\ge T_F$, the update in \eqref{eq:adapt_law} leads to
\begin{equation}
\label{eq:param_ineq}
\begin{aligned}
2\tilde\theta^\top\Gamma^{-1}\dot{\tilde\theta}
\le{}&-2\tilde\theta^\top\Psi(\hat x,u)^\top Ce
-k_c\sigma_N\|\tilde\theta\|^2
+\frac{k_c}{\sigma_N}\|R_N\|^2
\end{aligned}
\end{equation}
where the effects of all stored residuals are collected in
\begin{equation}
\label{eq:R_N}
R_N\triangleq\sum_{i=1}^N\mathcal G_i^\top r_i
\end{equation}
\end{lemma}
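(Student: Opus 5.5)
Since $\theta$ is constant, $\dot{\tilde\theta}=-\dot{\hat\theta}=-\Proj_{\Omega_\theta}(\hat\theta,\nu)$. I will split off the projection correction by writing
\[
2\tilde\theta^\top\Gamma^{-1}\dot{\tilde\theta}
=-2\tilde\theta^\top\Gamma^{-1}\nu-2\tilde\theta^\top\Gamma^{-1}\big[\Proj_{\Omega_\theta}(\hat\theta,\nu)-\nu\big].
\]
By \eqref{eq:proj_prop}, which applies because $\theta\in\Omega_\theta$ by Assumption~\ref{ass:compact}, the last term is nonpositive. Hence $2\tilde\theta^\top\Gamma^{-1}\dot{\tilde\theta}\le-2\tilde\theta^\top\Gamma^{-1}\nu$, and it remains to evaluate the right-hand side using \eqref{eq:update_direction} with $\chi_F(t)=1$ for $t\ge T_F$.

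The output-error term is immediate. Since $\tilde y=Ce$, it gives $-2\tilde\theta^\top\Psi(\hat x,u)^\top Ce$, which is exactly the first term of \eqref{eq:param_ineq}.

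For the stacked term, I use Lemma~\ref{lem:regression}, which holds because the stored windows lie in the design domain. It gives
\[
\varepsilon_i(\hat\theta)=\mathcal Y_i-\mathcal G_i\hat\theta=\mathcal G_i\tilde\theta+r_i .
\]
Summing over the stack yields
\[
\sum_{i=1}^N\mathcal G_i^\top\varepsilon_i=\hat S_N\tilde\theta+R_N ,
\]
so the stacked contribution equals $-2k_c\tilde\theta^\top\hat S_N\tilde\theta-2k_c\tilde\theta^\top R_N$. The stack is frozen once \eqref{eq:fe_condition} holds, so $\hat S_N\succeq\sigma_N I_q$ for $t\ge T_F$. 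The quadratic piece is therefore bounded by $-2k_c\sigma_N\|\tilde\theta\|^2$.

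The cross term is the only step needing care, because the constants must match \eqref{eq:param_ineq} exactly. I apply Young's inequality in the form
\[
2|\tilde\theta^\top R_N|\le\sigma_N\|\tilde\theta\|^2+\sigma_N^{-1}\|R_N\|^2 .
\]
Multiplied by $k_c$, this absorbs half of $2k_c\sigma_N\|\tilde\theta\|^2$. What is left is $-k_c\sigma_N\|\tilde\theta\|^2+\frac{k_c}{\sigma_N}\|R_N\|^2$. Adding the output-error term gives \eqref{eq:param_ineq}.
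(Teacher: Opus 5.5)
Your proof is correct and follows essentially the same route as the paper. The projection property \eqref{eq:proj_prop} removes the correction term, and the regression identity $\varepsilon_i=\mathcal G_i\tilde\theta+r_i$ turns the stacked term into $-2k_c\tilde\theta^\top\hat S_N\tilde\theta-2k_c\tilde\theta^\top R_N$; the bound $\hat S_N\succeq\sigma_N I_q$ and Young's inequality weighted by $\sigma_N$ then give \eqref{eq:param_ineq}, and you are slightly more explicit than the paper in isolating $\Proj_{\Omega_\theta}(\hat\theta,\nu)-\nu$ and stating that $\theta\in\Omega_\theta$ and in-domain windows justify \eqref{eq:proj_prop} and Lemma~\ref{lem:regression}.
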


\begin{proof}
From \eqref{eq:regression}, the error for each stored point can be written as $
\mathcal Y_i-\mathcal G_i\hat\theta
=\mathcal G_i\tilde\theta+r_i$
Moreover, $\dot{\tilde\theta}=-\dot{\hat\theta}$. Therefore, using the projection property together with
$\hat S_N\succeq\sigma_NI_q$ gives, for $t\ge T_F$,
\begin{align*}
2\tilde\theta^\top\Gamma^{-1}\dot{\tilde\theta}
&\le-2\tilde\theta^\top\Psi^\top Ce
-2k_c\tilde\theta^\top\hat S_N\tilde\theta
-2k_c\tilde\theta^\top R_N\\
&\le-2\tilde\theta^\top\Psi^\top Ce
-2k_c\sigma_N\|\tilde\theta\|^2
+2k_c\|\tilde\theta\|\|R_N\|
\end{align*}
For the last term, Young's inequality can be applied as $
2k_c\|\tilde\theta\|\|R_N\|
\le k_c\sigma_N\|\tilde\theta\|^2
+\frac{k_c}{\sigma_N}\|R_N\|^2
$. By replacing the last term by this bound gives
\eqref{eq:param_ineq}.
\end{proof}

\begin{lemma}
\label{lem:RN_bound}
To cover all time intervals,lets define $
\mathcal W_N\triangleq
\bigcup_{i=1}^N[t_i-\Delta,t_i]$. Then, the aggregate residual satisfies:
\begin{equation}
\label{eq:RN_bound}
\|R_N\|^2\le
\chi_e\sup_{\tau\in\mathcal W_N}\|e(\tau)\|^2
+\chi_d\sup_{\tau\in\mathcal W_N}\|d(\tau)\|^2
\end{equation}
where one explicit choice of the constants is
\begin{equation}
\label{eq:chied}
\begin{aligned}
\bar g\triangleq\Delta\|C\|\bar\Phi, \,
\chi_e\triangleq2N^2\bar g^2c_e^2\Delta^2, \,
\chi_d\triangleq2N^2\bar g^2c_d^2\Delta^2
\end{aligned}
\end{equation}
\end{lemma}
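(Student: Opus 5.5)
The plan is to bound each summand of $R_N=\sum_{i=1}^N\mathcal G_i^\top r_i$ separately, combine them with the triangle inequality, and then square.

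\textbf{Step 1: bound the regressor.} From \eqref{eq:G_i} and \eqref{eq:Phi_bound}, since $\hat x(\tau)\in\hat{\mathcal X}$ on each window,
\[
\|\mathcal G_i\|\le\int_{t_i-\Delta}^{t_i}\|C\|\|\Phi(\hat x(\tau),u(\tau))\|d\tau\le\Delta\|C\|\bar\Phi=\bar g .
\]

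\textbf{Step 2: bound the residual.} Apply \eqref{eq:r_i_bound} from Lemma~\ref{lem:regression}. Since $[t_i-\Delta,t_i]\subset\mathcal W_N$, each integral is at most $\Delta$ times the supremum over $\mathcal W_N$. Writing $\bar e\triangleq\sup_{\mathcal W_N}\|e\|$ and $\bar d_W\triangleq\sup_{\mathcal W_N}\|d\|$, this gives
\[
\|r_i\|\le c_e\Delta\bar e+c_d\Delta\bar d_W .
\]

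\textbf{Step 3: sum and square.} Submultiplicativity and the triangle inequality give
\[
\|R_N\|\le\sum_{i=1}^N\|\mathcal G_i\|\|r_i\|\le N\bar g\Delta\,(c_e\bar e+c_d\bar d_W).
\]
Squaring and using $(a+b)^2\le 2a^2+2b^2$ yields
\[
\|R_N\|^2\le 2N^2\bar g^2c_e^2\Delta^2\,\bar e^2+2N^2\bar g^2c_d^2\Delta^2\,\bar d_W^2 .
\]
Since $\bar e^2=\sup_{\mathcal W_N}\|e\|^2$ (and likewise for $d$), this is exactly \eqref{eq:RN_bound} with the constants in \eqref{eq:chied}.

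\textbf{Where care is needed.} The computation itself is routine. The only point needing attention is that Lemma~\ref{lem:regression} and \eqref{eq:Phi_bound} are valid only if every stored window lies in $\mathcal D_e$. This means $\|e(\tau)\|\le r_e$ and $\hat x(\tau)\in\hat{\mathcal X}$ on $\mathcal W_N$, which I will state as the standing hypothesis (as in Lemma~\ref{lem:regression}). If $\bar e$ or $\bar d_W$ were infinite the bound would hold trivially, but on compact windows with locally bounded signals both are finite.
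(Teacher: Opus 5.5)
Your proposal is correct and follows the paper's proof essentially line for line. You bound $\|\mathcal G_i\|\le\bar g$ via \eqref{eq:Phi_bound} and $\|r_i\|\le\Delta(c_e\sup_{\mathcal W_N}\|e\|+c_d\sup_{\mathcal W_N}\|d\|)$ via \eqref{eq:r_i_bound}, sum with the triangle inequality, and square using $(a+b)^2\le 2a^2+2b^2$. You also state explicitly the standing hypothesis that the stored windows lie in $\mathcal D_e$, which the paper leaves implicit.
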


\begin{proof}
Let
$E_N\triangleq\sup_{\tau\in\mathcal W_N}\|e(\tau)\|$ and
$D_N\triangleq\sup_{\tau\in\mathcal W_N}\|d(\tau)\|$.
Equations~\eqref{eq:Phi_bound} and \eqref{eq:r_i_bound} give
$\|\mathcal G_i\|\leq\bar g$ and
$\|r_i\|\leq\Delta(c_eE_N+c_dD_N)$. Therefore,
\[
\|R_N\|\leq\sum_{i=1}^N\|\mathcal G_i\|\|r_i\|
\leq N\bar g\Delta(c_eE_N+c_dD_N)
\]
By squaring this inequality and applying $(a+b)^2\leq2a^2+2b^2$ yields
\eqref{eq:RN_bound}.
\end{proof}

\begin{figure*}[!t]
    \centering \includegraphics[width=0.31\textwidth] {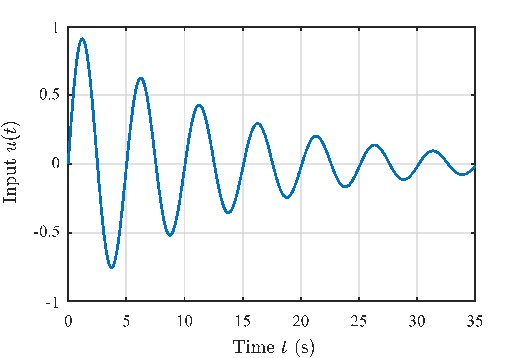}
    \hfill   \includegraphics[width=0.31\textwidth]{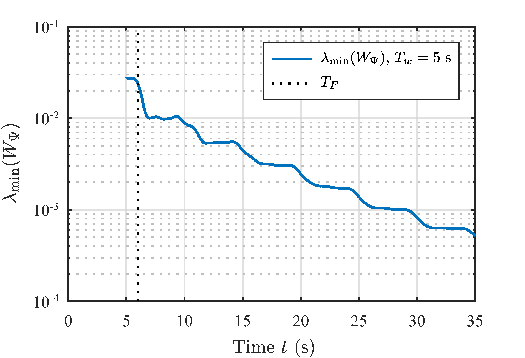}
    \hfill \includegraphics[width=0.31\textwidth]   {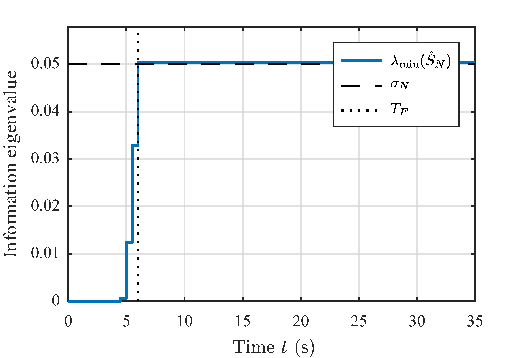}
    \caption{ Left: decaying input. Center: minimum eigenvalue of the $5$s sliding window Gramian. Right: estimated-state finite-excitation certificate.}
    \label{fig:excitation_summary}
\end{figure*}

\begin{figure*}[!t]
\centering
\includegraphics[width=0.32\textwidth]
{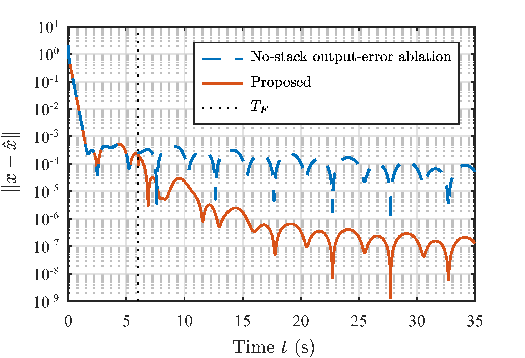}
\hfill
\includegraphics[width=0.32\textwidth]
{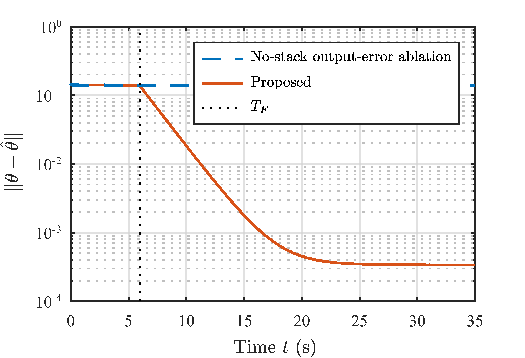}
\hfill
\begin{minipage}[c]{0.30\textwidth}
\centering
\scriptsize
\setlength{\tabcolsep}{2pt}
\renewcommand{\arraystretch}{1.12}
\begin{tabular}{@{}lcc@{}}
\toprule
Method & $J_{e,F}$ & $J_\theta$\\
\midrule
No stack & $1.7525{\times}10^{-4}$ & $1.3891{\times}10^{-1}$\\
Proposed & $2.1322{\times}10^{-5}$ & $3.3858{\times}10^{-4}$\\
\midrule
Factor & $8.22$ & $410.28$\\
\bottomrule
\end{tabular}
\end{minipage}
\caption{Estimation comparison: state error (left), parameter error
(center), and post-activation/terminal metrics (right).}
\label{fig:estimation_errors}
\end{figure*}

\begin{figure*}[!t]
    \centering
    \includegraphics[width=0.31\textwidth]
    {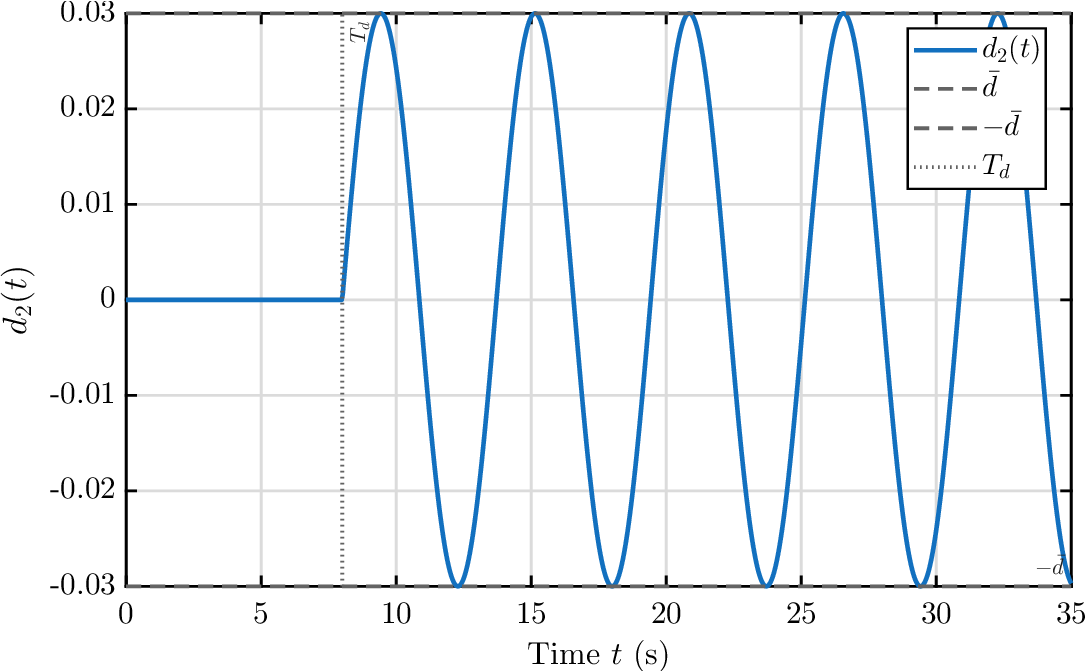}
    \hfill
    \includegraphics[width=0.31\textwidth]
    {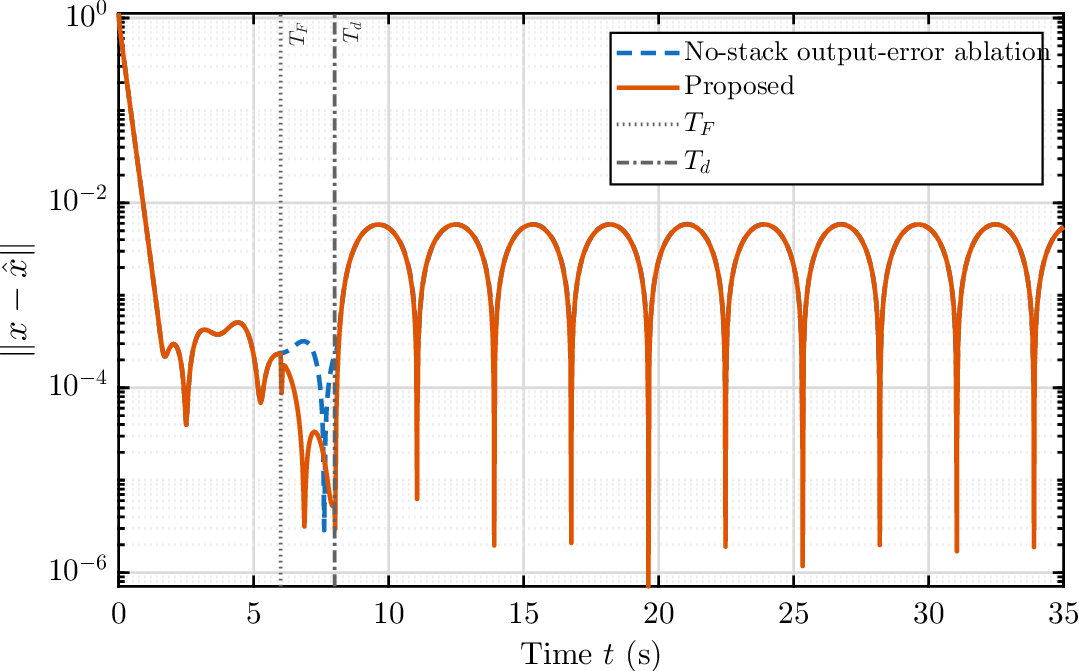}
    \hfill
    \includegraphics[width=0.31\textwidth]
    {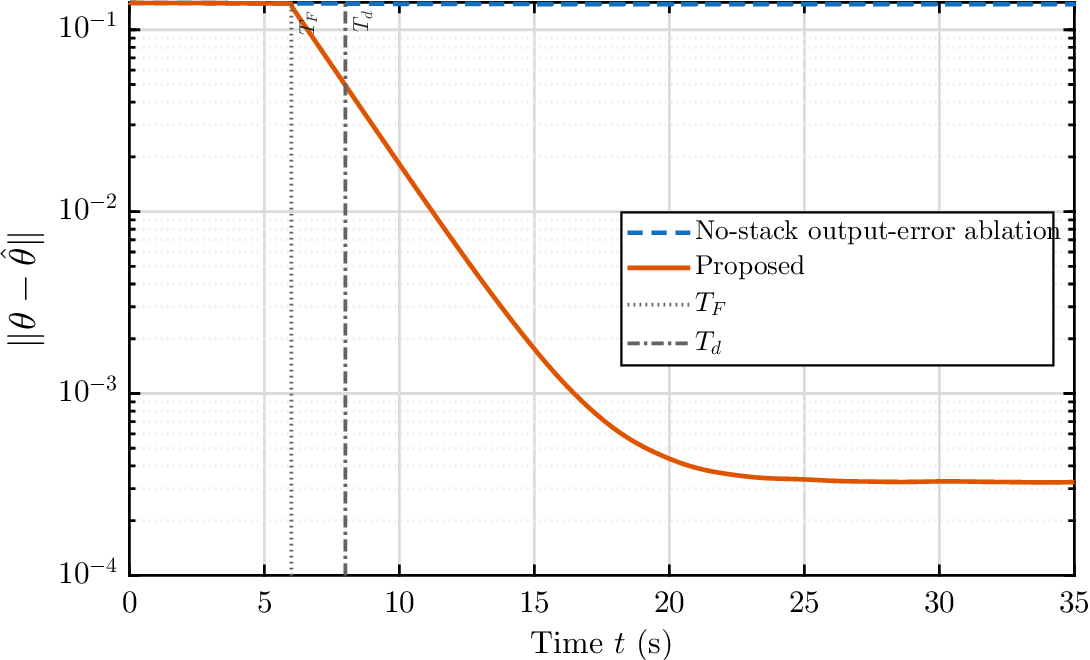}
    \caption{Disturbed-case validation with $\bar d=0.03$ and
    $T_d=8$s. Left: applied bounded disturbance. Middle:
    state-estimation errors. Right: parameter-estimation errors.
    The history stack is frozen at $T_F=6s$ before the disturbance
    is introduced.}
    \label{fig:disturbed_validation}
\end{figure*}

\section{Stability Analysis}
\label{sec:stability}

Here, the stability analyses is performed by considering the following Lyapunov candidate function:

\begin{equation}
\label{eq:V}
V=e^\top Pe+\tilde\theta^\top\Gamma^{-1}\tilde\theta
\end{equation}

For $z\triangleq\col(e,\tilde\theta)$, the bounds hold as follows:
\begin{equation}
\begin{aligned}
\label{eq:V_bounds}
\underline{\lambda}\|z\|^2
&\le V \le \bar{\lambda}\|z\|^2\\
\underline\lambda\triangleq
&\min\{\lambda_{\min}(P),\lambda_{\min}(\Gamma^{-1})\}\\
\bar\lambda\triangleq
&\max\{\lambda_{\max}(P),\lambda_{\max}(\Gamma^{-1})\}
\end{aligned}
\end{equation}

The imperfect cancellation of the state-error cross term by the measurable output-error update is represented by:

\begin{equation}
\label{eq:QN}
Q_N\triangleq
\begin{bmatrix}
a_e&-\bar m_\Psi\\
-\bar m_\Psi&k_c\sigma_N
\end{bmatrix}
\end{equation}

The condition:
\begin{equation}
\label{eq:coupled_gain_condition}
0\prec Q_N
\quad\Longleftrightarrow\quad
\bar{m}_{\Psi}^{\,2}<a_e k_c\sigma_N
\end{equation}
 
can be dictated by selecting $k_c$ after the stack is verified.  
\begin{theorem}
\label{thm:main}
Suppose assumptions~\ref{ass:compact}-\ref{ass:observer_lmi} are held. Then, fixed stack satisfies \eqref{eq:fe_condition}, also condition \eqref{eq:coupled_gain_condition} is held. Consider known constants $\bar e_s,\bar d_s\ge0$ so:

\begin{equation}
\label{eq:stack_error_bound}
\begin{aligned}
\sup_{\tau\in\mathcal W_N}\|e(\tau)\|&\le\bar e_s, \,
\sup_{\tau\in\mathcal W_N}\|d(\tau)\|\le\bar d_s
\end{aligned}
\end{equation}

On the stated data domain, $\bar e_s=r_e$ and $\bar d_s=\bar d$ can be always used as conservative choice. Also, by defining

\begin{align}
\lambda_Q&\triangleq\lambda_{\min}(Q_N); \,
\mu\triangleq\frac{\lambda_Q}{\bar\lambda}; \,
c_s\triangleq a_d\bar d^2
+\frac{k_c}{\sigma_N}
(\chi_e\bar e_s^2+\chi_d\bar d_s^2) \label{eq:cs}
\end{align}

For the regional design domain, the below condition is:

\begin{equation}
\label{eq:invariance_condition}
\max\left\{V(T_F),\frac{c_s}{\mu}\right\}
<\lambda_{\min}(P)r_e^2
\end{equation}

Then the observer error stays in $\mathcal D_e$ for all $t\ge T_F$ and:

\begin{equation}
\label{eq:V_comparison}
\begin{aligned}
V(t)\le{}&e^{-\mu(t-T_F)}V(T_F)+\frac{c_s}{\mu}\big(1-e^{-\mu(t-T_F)}\big),
\quad t\ge T_F
\end{aligned}
\end{equation}

Hence,
\begin{equation}
\label{eq:z_transient_bound}
\begin{aligned}
\|z(t)\|^2
\le{}&\frac{\bar\lambda}{\underline\lambda}
e^{-\mu(t-T_F)}\|z(T_F)\|^2+\frac{c_s}{\underline\lambda\mu}
\big(1-e^{-\mu(t-T_F)}\big)
\end{aligned}
\end{equation}

\begin{equation}
\label{eq:ultimate_radius}
\limsup_{t\to\infty}\|z(t)\|
\le\sqrt{\frac{c_s}{\underline\lambda\mu}}
\end{equation}

Thus, errors are ultimately uniformly bounded. If Assumptions~\ref{ass:oslqib}-\ref{ass:Phi_lip}, the LMI implication, and \eqref{eq:mpsi_bound} are globally, the condition \eqref{eq:invariance_condition} is no more needed and the result gets global.

\end{theorem}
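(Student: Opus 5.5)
We differentiate the Lyapunov function $V$ in \eqref{eq:V} along the error dynamics \eqref{eq:error_dynamics} and the projected update \eqref{eq:adapt_law} for $t\ge T_F$. We do this while the trajectory remains in $\mathcal D_e$, and close the argument with an invariance (continuation) step.

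\textbf{Derivative of $V$.} Write $\dot e=\eta_0+\Phi(\hat x,u)\tilde\theta$. Lemma~\ref{lem:observer_dissipation} gives
\[
2e^\top P\dot e\le -a_e\|e\|^2+a_d\|d\|^2+2e^\top P\Phi(\hat x,u)\tilde\theta .
\]
Adding the bound of Lemma~\ref{lem:param_ineq}, the cross terms combine into $2e^\top\big(P\Phi-C^\top\Psi\big)\tilde\theta=2e^\top\mathcal M_\Psi\tilde\theta$. By \eqref{eq:mpsi_bound} this is at most $2\bar m_\Psi\|e\|\|\tilde\theta\|$, provided $\hat x\in\hat{\mathcal X}$. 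We then obtain
\[
\dot V\le -\begin{bmatrix}\|e\|\\ \|\tilde\theta\|\end{bmatrix}^{\!\top} Q_N\begin{bmatrix}\|e\|\\ \|\tilde\theta\|\end{bmatrix} + a_d\bar d^2+\frac{k_c}{\sigma_N}\|R_N\|^2 .
\]
Lemma~\ref{lem:RN_bound} together with \eqref{eq:stack_error_bound} bounds the last term by $\frac{k_c}{\sigma_N}(\chi_e\bar e_s^2+\chi_d\bar d_s^2)$. The stack is frozen, so $R_N$ is constant for $t\ge T_F$ and this bound holds uniformly in time. Condition \eqref{eq:coupled_gain_condition} gives $Q_N\succ0$, hence the quadratic form is at least $\lambda_Q\|z\|^2$. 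Using \eqref{eq:V_bounds}, $\|z\|^2\ge V/\bar\lambda$, so $\dot V\le-\mu V+c_s$. The comparison lemma then yields \eqref{eq:V_comparison}.

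\textbf{Invariance step.} This is the main obstacle, since every inequality used above (OSL/QIB, the Lipschitz bounds, $\bar m_\Psi$) holds only on $\mathcal D_e$. The right side of \eqref{eq:V_comparison} is a convex combination of $V(T_F)$ and $c_s/\mu$, so by \eqref{eq:invariance_condition} it stays strictly below $\lambda_{\min}(P)r_e^2$. We argue by contradiction. Let $t^*$ be the first time after $T_F$ at which $\|e(t^*)\|=r_e$. On $[T_F,t^*]$ the trajectory lies in $\mathcal D_e$; here $x\in\mathcal X$ by Assumption~\ref{ass:compact}, and $\hat x=x-e\in\hat{\mathcal X}$ by the construction of $\hat{\mathcal X}$. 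Hence the differential inequality and \eqref{eq:V_comparison} hold on this interval, giving
\[
\lambda_{\min}(P)\|e(t^*)\|^2\le V(t^*)<\lambda_{\min}(P)r_e^2,
\]
a contradiction. Continuity of $V$ and existence of solutions (Assumption~\ref{ass:compact}) make this rigorous. Therefore $\|e\|<r_e$ for all $t\ge T_F$ and the estimate holds globally in time. In the global case $r_e=\infty$, this step is vacuous.

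\textbf{Conclusions.} Dividing \eqref{eq:V_comparison} using $\underline\lambda\|z\|^2\le V$ and $V(T_F)\le\bar\lambda\|z(T_F)\|^2$ gives \eqref{eq:z_transient_bound}. Letting $t\to\infty$ and taking square roots gives \eqref{eq:ultimate_radius}, and hence uniform ultimate boundedness. When $d\equiv0$ and the stored residuals vanish, $c_s=0$ and the bound gives exponential convergence.
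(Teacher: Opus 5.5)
Your proof is correct and follows the paper's argument essentially step for step. It uses the same combination of Lemma~\ref{lem:observer_dissipation} and Lemma~\ref{lem:param_ineq} to form the cross term $2e^\top\mathcal M_\Psi\tilde\theta$, the $Q_N$ quadratic-form bound, Lemma~\ref{lem:RN_bound} with the frozen stack to reach $\dot V\le-\mu V+c_s$, the comparison lemma, and a continuation argument for staying in $\mathcal D_e$. You actually spell out that last step more carefully than the paper, via the first hitting time of $\|e\|=r_e$ and the check $\hat x=x-e\in\hat{\mathcal X}$.

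One small slip is in your closing aside, which lies outside the theorem. Vanishing stored residuals do not make $c_s=0$, since $c_s$ still contains $\frac{k_c}{\sigma_N}\chi_e\bar e_s^2$. The exponential case (Corollary~\ref{cor:exp}) instead uses $R_N=0$ directly in the derivative inequality, before Lemma~\ref{lem:RN_bound} is applied.
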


\begin{proof}
Along the projected dynamics, $V$ is locally absolutely continuous and its derivative exists almost everywhere. By defining $
\eta_e\triangleq(A-LC)e+\Dphi+\DPhi\theta+Dd,
$ the error dynamics gives us $
\dot V
=2e^\top P\eta_e
+2e^\top P\Phi(\hat x,u)\tilde\theta
+2\tilde\theta^\top\Gamma^{-1}\dot{\tilde\theta}
$. By invoking Lemmas~\ref{lem:observer_dissipation} and \ref{lem:param_ineq} results in
\begin{align*}
\dot V
&\le-a_e\|e\|^2+a_d\|d\|^2
+2e^\top\mathcal M_\Psi(\hat x,u)\tilde\theta\\
&\quad-k_c\sigma_N\|\tilde\theta\|^2
+\frac{k_c}{\sigma_N}\|R_N\|^2\le-a_e\|e\|^2-k_c\sigma_N\|\tilde\theta\|^2
\\
&+2\bar m_\Psi\|e\|\|\tilde\theta\|+a_d\|d\|^2+\frac{k_c}{\sigma_N}\|R_N\|^2
\end{align*}

Based on \eqref{eq:QN}, let $\zeta_z\triangleq\col(\|e\|,\|\tilde\theta\|)$. Then we have:

\begin{align*}
a_e\|e\|^2+k_c\sigma_N\|\tilde\theta\|^2
-2\bar m_\Psi\|e\|\|\tilde\theta\|
&=\zeta_z^\top Q_N\zeta_z\ge\lambda_Q\|z\|^2
\end{align*}

Lemma~\ref{lem:RN_bound}, \eqref{eq:stack_error_bound}, and $\|d(t)\|\le\bar d$ resulting in

\begin{equation}
\label{eq:Vdot_main}
\dot V\le-\lambda_Q\|z\|^2+c_s\le-\mu V+c_s
\end{equation}

for almost every $t\ge T_F$ as long as the solution stays inside the designing domain. By applying the comparison lemma, \eqref{eq:V_comparison} is obtained. For the regional case, \eqref{eq:V_comparison} together with \eqref{eq:invariance_condition} yields $V(t)<\lambda_{\min}(P)r_e^2$. Because $e^\top Pe\le V$, the error cannot get to $\|e\|=r_e$. Therefore, a standard continuation argument implies forward invariance for $t\ge T_F$. Finally, by combining \eqref{eq:V_comparison} and \eqref{eq:V_bounds} it yields \eqref{eq:z_transient_bound} and \eqref{eq:ultimate_radius}.
\end{proof}

\begin{corollary}
\label{cor:exp}
On the conditions of Theorem~\ref{thm:main}, consider $d(t)\equiv0$. Then $\bar d=\bar d_s=0$ and
\begin{equation}
\label{eq:dist_free_bound}
\limsup_{t\to\infty}\|z(t)\|
\le
\sqrt{\frac{k_c\chi_e}{\underline\lambda\mu\sigma_N}}\,\bar e_s
\end{equation}

If fixed stack has zero aggregate residual $R_N=0$:

\begin{equation}
\label{eq:exp_conv}
\|z(t)\|
\le\sqrt{\frac{\bar\lambda}{\underline\lambda}}
\|z(T_F)\|e^{-\mu(t-T_F)/2},\qquad t\ge T_F
\end{equation}
\end{corollary}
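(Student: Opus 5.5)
The corollary follows by specializing Theorem~\ref{thm:main} and, for the second claim, by re-running the final step of its proof with the residual term removed.

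\emph{Disturbance-free bound.} With $d(t)\equiv0$ we may take $\bar d=0$ and $\bar d_s=0$ in \eqref{eq:stack_error_bound}. The constant in \eqref{eq:cs} then reduces to
\[
c_s=\frac{k_c}{\sigma_N}\chi_e\bar e_s^2 .
\]
Substituting this into \eqref{eq:ultimate_radius} gives $\limsup_{t\to\infty}\|z(t)\|\le\sqrt{k_c\chi_e\bar e_s^2/(\underline\lambda\mu\sigma_N)}$. Pulling $\bar e_s\ge0$ out of the square root yields \eqref{eq:dist_free_bound}. The invariance hypothesis \eqref{eq:invariance_condition} is inherited unchanged, now with the smaller $c_s$, so the solution stays in the design domain and the theorem applies as is.

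\emph{Zero-residual case.} Here I would not set $c_s=0$ in the theorem's formula, since $c_s$ was built from the upper bound on $\|R_N\|^2$ rather than from $R_N$ itself. Instead I return to the differential inequality in the proof of Theorem~\ref{thm:main}, before Lemma~\ref{lem:RN_bound} was invoked. With $d\equiv0$ and $R_N=0$, the terms $a_d\|d\|^2$ and $\frac{k_c}{\sigma_N}\|R_N\|^2$ both vanish. (Equivalently, one can go back to Lemma~\ref{lem:param_ineq}: the cross term $-2k_c\tilde\theta^\top R_N$ is zero, so Young's inequality is not needed.) The same completion-of-squares argument with $Q_N\succ0$ then gives
\[
\dot V\le-\lambda_Q\|z\|^2\le-\mu V
\]
for almost every $t\ge T_F$, as long as the solution remains in $\mathcal D_e$.

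\emph{Invariance and conclusion.} Forward invariance follows as in the theorem: $V(t)\le V(T_F)<\lambda_{\min}(P)r_e^2$ by \eqref{eq:invariance_condition}, so $\|e\|$ never reaches $r_e$. The comparison lemma then gives $V(t)\le e^{-\mu(t-T_F)}V(T_F)$. Using the bounds \eqref{eq:V_bounds},
\[
\underline\lambda\|z(t)\|^2\le\bar\lambda\, e^{-\mu(t-T_F)}\|z(T_F)\|^2 ,
\]
and taking square roots gives \eqref{eq:exp_conv}.

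The only subtle step is interpreting $R_N=0$ correctly: it must be inserted before Lemma~\ref{lem:RN_bound}, because that lemma's bound is generally positive even when the residual itself vanishes. Everything else is substitution.
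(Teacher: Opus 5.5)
Your proposal is correct and follows the same route as the paper. You obtain the first bound by substituting $\bar d=\bar d_s=0$ into $c_s$ and \eqref{eq:ultimate_radius}, and the exponential bound by returning to the derivative inequality before Lemma~\ref{lem:RN_bound} with $d\equiv0$ and $R_N=0$, then applying the comparison lemma and \eqref{eq:V_bounds}; your explicit forward-invariance check via $V(t)\le V(T_F)<\lambda_{\min}(P)r_e^2$ is a detail the paper leaves implicit but is consistent with its argument.
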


\begin{proof}
The first result stems from \eqref{eq:ultimate_radius} by putting $d=0$. For the second one, consider the derivative inequality before applying Lemma~\ref{lem:RN_bound}. In absence of disturbance and aggregate residual, proof of Theorem~\ref{thm:main} yields $
\dot V\le-\lambda_Q\|z\|^2\le-\mu V
$, thus $V(t)\le V(T_F)e^{-\mu(t-T_F)}$, and \eqref{eq:exp_conv} follows from \eqref{eq:V_bounds}.
\end{proof}

\begin{remark}

The excitation test \eqref{eq:fe_condition} is directly achievable even when $\bar e_s$ is unknown. But, a small output residual $\|y-C\hat x\|=\|Ce\|$ does not limit $\|e\|$ if $C$ has nullspace of nontrivial. Thus, an output-residual threshold confirms \eqref{eq:stack_error_bound} by utilizing a state-error certificate alone.
\end{remark}

 \section{Numerical Results and Discussion}
\label{sec:simulation_results}

In this section, the proposed observer's results are synthesized numerically. The observer's results are compared with and without the history-stack term but both with the observer gain and adaptation gains, projection and initial conditions. For simulation, the system matrices have been selected as:
\begin{equation}
\label{eq:simulation_matrices}
A=
\begin{bmatrix}
0&1\\-1.2&-5
\end{bmatrix},\quad
B=
\begin{bmatrix}
0\\1
\end{bmatrix},\quad
C=
\begin{bmatrix}
1&0
\end{bmatrix},\quad
D=I_2 ,
\end{equation}
and therefore only the first state is measured. The nonlinear functions and true parameter vector are considered as follows:
\begin{align}
\phi(x)
&=-0.25
\begin{bmatrix}
x_1^3&x_2^3
\end{bmatrix}^{\top},\, \theta=
\begin{bmatrix}
0.85&-1.10
\end{bmatrix}^{\top}
\label{eq:simulation_phi}\\
\Phi(x,u)
&=
\begin{bmatrix}
u+0.2\sin(x_1)&\sin(2u)+0.2\sin(x_2)\\
0.005x_1&0.005x_2
\end{bmatrix}
\label{eq:simulation_Phi}
\end{align}
The initial conditions are
$x(0)=[1.20,-0.75]^\top$, $\hat x(0)=[-0.60,0.20]^\top$, and
$\hat\theta(0)=[0.75,-1.00]^\top$, with $\bar\theta=1.5$. 
For the cubic nonlinearity, $e^\top\Delta_\phi\leq0$.
The admissible constants and simulation settings are summarized in
Table~\ref{tab:simulation_parameters}.

\begin{table}[!t]
\centering
\caption{Observer design and simulation parameters.}
\label{tab:simulation_parameters}
\scriptsize
\setlength{\tabcolsep}{2pt}
\renewcommand{\arraystretch}{1.08}
\begin{tabular}{@{}llll@{}}
\toprule
Parameters & Values & Parameters & Values\\
\midrule
Design region
 & $|x_j|,|\hat x_j|\leq5.2$
 & $(\rho,\beta)$
 & $(0,0)$\\
$(\ell_\phi,\alpha)$
 & $(20.28,411.27)$
 & $(\bar\Phi,\ell_\Phi)$
 & $(1.69,0.20)$\\
$P$
 & $0.5I_2$
 & $L$
 & $[100,-0.2]^\top$\\
$(a_e,a_d)$
 & $(1,20)$
 & $(\tau_1,\tau_2,\tau_3)$
 & $(0.5,0.001,0.43)$\\
$Y$
 & $PL$
 & LMI max.\ eigenvalue
 & $-1.0{\times}10^{-3}$\\
$\Psi$
 & $(C^\top)^\dagger P\Phi$
 & $\bar m_\Psi$
 & $1.83{\times}10^{-2}$\\
$\Gamma$
 & $5I_2$
 & $(k_c,\Delta,\sigma_N)$
 & $(2,1~\mathrm{s},0.05)$\\
Candidate start/interval
 & $4~\mathrm{s}/0.5~\mathrm{s}$
 & Residual threshold
 & $5{\times}10^{-3}$\\
$(N,T_w,T_{\mathrm{sim}})$
 & $(5,5,35)~\mathrm{s}$
 & $d(t)$
 & $0$\\
\bottomrule
\multicolumn{4}{@{}l@{}}{$u(t)=e^{-0.075t}\sin(1.25t)$}
\end{tabular}
\end{table}

For this example, ($I-\Pi_C=\diag(0,1)$), and therefore
$\bar m_{\Psi^\star}
=0.0025\sup_{|\hat x_j|\leq5.2}\sqrt{\hat x_1^2+\hat x_2^2}=1.83\times10^{-2}$, confirming the value reported in
Table~\ref{tab:simulation_parameters}. Negative maximum eigenvalue indicates strict LMI feasibility. The absence of disturbances results in finite excitation learning. For the excitation diagnostic, the sliding-window Gramian of the no-stack adaptation regressor is defined for $t\geq T_w$ as:
\begin{equation}
W_\Psi(t)\triangleq
\int_{t-T_w}^{t}
\Psi(\hat x_{\mathrm{ns}}(\tau),u(\tau))^\top
\Psi(\hat x_{\mathrm{ns}}(\tau),u(\tau))d\tau.
\end{equation}

where $\hat x_{\mathrm{ns}}$ is the no-stack state estimate. As shown in Fig.~\ref{fig:excitation_summary}, the Gramian's minimum eigenvalue decreases from $2.7376\times10^{-2}$ to $5.1913\times10^{-4}$, which is indicative of loss of excitation.

It is observed from Fig.~\ref{fig:excitation_summary}that the stack is finite excitation at $T_F$ of 6s with five stored points. This is with $\lambda_{\min}(\hat S_N)$ equal to 0.05 which is more than $\sigma_N$. Moreover, the coupled gain margin is ($a_ek_c\sigma_N-\bar m_\Psi^2$), which itself equals $9.9662\times10^{-2}$.

Offline validation provides the information in the true state scenario where $\lambda_{\min}(S_N^x)=0.0503$ wheras $\rho_G=6.0992\times10^{-4}$ matches with the estimated to true information bound. The obtained values are $\bar e_s=5.2388\times10^{-4}$ and $\|R_N\|=2.8138\times10^{-4}$. The regional invariance condition is satisfied also because its left-hand side is $5.2891$ when compared with $\lambda_{\min}(P)r_e^2$ which equals 8 for when $r_e$ is 4. Additionally, $\max_t\|e(t)\|$ equals 2.03 which becomes smaller than $r_e$ and both the plant and observer trajectories stay in selected design domain. 

Figure~\ref{fig:estimation_errors} indicates that stack activation minimizes parameter error while the no-stack observer loses its learning due to decay of excitation. The right-hand side presents the post-activation state error and terminal parameter error given by:
\begin{equation}
J_{e,F}\triangleq
\sqrt{\frac{1}{T_{\mathrm{sim}}-T_F}
\int_{T_F}^{T_{\mathrm{sim}}}\|e(t)\|^2dt},
\qquad
J_\theta\triangleq
\|\tilde\theta(T_{\mathrm{sim}})\|
\end{equation}

To examine disturbed case covered by Theorem~1, bounded disturbance
$d(t)=[0,\;0.03\sin(1.1(t-T_d))]^\top$ is applied for
$t$ greater than or equal to $T_d$ which is itself 8s, while $d(t)$ is zero before $T_d$. Because $T_F=6$ which is smaller that $T_d$, history stack is frozen prior to the disturbance input. From Fig.~\ref{fig:disturbed_validation}, the error in the estimation of state and parameters is bounded following the disturbance input. %The state errors of the two observers remain comparable, whereas the proposed observer retains a substantially smaller parameter-estimation error than the no-stack ablation. The numerical evaluation also satisfies the joint-error bound in (55) and the regional condition in (48), for which the two sides are $5.06<8$.

\section{Conclusion}

For OSL-QIB nonlinear systems with estimated-state regressors a finite-excitation adaptive observer was designed. By considering the resulting regressor mismatch in the output-integral data, the stability was derived with bounded disturbances and with exponential convergence recovered when disturbance-free and residual-free stored data.

\end{document}